\numberwithin{equation}{section}
\numberwithin{figure}{section}
\theoremstyle{plain}
\newtheorem{thm}{\protect\theoremname}
  \theoremstyle{definition}
  \newtheorem{defn}[thm]{\protect\definitionname}
  \providecommand{\definitionname}{$BDj(B$B5A(B}
\providecommand{\theoremname}{$BDj(B$BM}(B}
\begin{document}

\title{Solvability of HornSAT and CNFSAT}

\author{$B>.(B$BNS(B $B90(B$BFs(B}
\begin{abstract}
This article describes the solvability of HornSAT and CNFSAT. Unsatisfiable
HornCNF have partially ordered set that is made by causation of each
clauses. In this partially ordered set, Truth value assignment that
is false in each clauses become simply connected space. Therefore,
if we reduce CNFSAT to HornSAT, we must make such partially ordered
set in HornSAT. But CNFSAT have correlations of each clauses, the
partially ordered set is not in polynomial size. Therefore, we cannot
reduce CNFSAT to HornSAT in polynomial size. 
\end{abstract}
\maketitle

\section{$B35(B$BMW(B}

$BK\(B$BO@(B$BJ8(B$B$G(B$B$O(B$HornSAT$$B$H(B$CNFSAT$$B$N(B$B2D(B$B2r(B$B@-(B$B$K(B$B$D(B$B$$(B$B$F(B$B=R(B$B$Y(B$B$k(B$B!#(B$B=<(B$BB-(B$BIT(B$B2D(B$B$H(B$B$J(B$B$k(BHornCNF$B$K(B$B$O(B$B@a(B$BF1(B$B;N(B$B$N(B$B0x(B$B2L(B$B4X(B$B78(B$B$K(B$B$h(B$B$k(B$BH>(B$B=g(B$B=x(B$B$,(B$BB8(B$B:_(B$B$9(B$B$k(B$B!#(B$B$3(B$B$N(B$BH>(B$B=g(B$B=x(B$B$K(B$B$*(B$B$$(B$B$F(B$B!"(B$B@a(B$B$O(B$B56(B$B$H(B$B$J(B$B$k(B$B??(B$BM}(B$BCM(B$B3d(B$BEv(B$B$K(B$B$h(B$B$k(B$B5w(B$BN

$B$J(B$B$*(B$B!"(B$BK\(B$BO@(B$BJ8(B$BCf(B$B$G(B$B$O(B$B;2(B$B9M(B$BJ8(B$B8

\section{HornSAT$B$N(B$B0x(B$B2L(B$B4X(B$B78(B}

$B$^(B$B$:(B$B;O(B$B$a(B$B$K(B$HornSAT$$B$N(B$BC1(B$BO"(B$B7k(B$B@-(B$B$r(B$B<((B$B$9(B$B!#(B$HornSAT$$B$O(B$B@a(B$BF1(B$B;N(B$B$N(B$B4X(B$B78(B$B$,(B$B0x(B$B2L(B$B4X(B$B78(B$B$G(B$B$"(B$B$j(B$B!"(B$BC1(B$B0L(B$BF3(B$B=P(B$B$,(B$B2D(B$BG=(B$B$G(B$B$"(B$B$k(B$B!#(B$B$=(B$B$N(B$B$?(B$B$a(B$BIt(B$BJ,(B$B<0(B$B$,(B$B>o(B$B$K(B$BC1(B$BO"(B$B7k(B$B$H(B$B$J(B$B$k(B$BH>(B$B=g(B$B=x(B$B$,(B$BB8(B$B:_(B$B$9(B$B$k(B$B!#(B
\begin{defn}
\label{def: =006B63=00898F=005217}$BO@(B$BM}(B$B<0(B$F\in CNF$$B$K(B$B$D(B$B$$(B$B$F(B$B!"(B$F$$B$N(B$B5w(B$BN

$f_{p}<f_{q}\rightarrow f_{p}\wedge c=f_{q},\, f_{p},f_{q}\subset F,c\in F$

$B$N(B$B$h(B$B$&(B$B$K(B$BA0(B$B8e(B$B$N(B$B85(B$B$N(B$B:9(B$BJ,(B$B$,(B$B@a(B$B$H(B$B$J(B$B$k(B$BH>(B$B=g(B$B=x(B$\left\{ f_{i},<\right\} $$B$r(B$B9M(B$B$((B$B$k(B$B!#(B$B$3(B$B$3(B$B$G(B$BA4(B$B$F(B$B$N(B$f_{i}$$B$,(B$BC1(B$BO"(B$B7k(B$B$H(B$B$J(B$B$k(B$B>l(B$B9g(B$B!"(B$\left\{ f_{i},<\right\} $$B$r(B$B@5(B$B5,(B$BNs(B$B$H(B$B8F(B$B$V(B$B!#(B

$B$^(B$B$?(B$B!"(B$B@5(B$B5,(B$BNs(B$B$O(B$B:9(B$BJ,(B$B$N(B$B@a(B$B$G(B$BI=(B$B$9(B$B$3(B$B$H(B$B$,(B$B$"(B$B$k(B$B!#(B$B$D(B$B$^(B$B$j(B$f_{p}<f_{q}<f_{r}$$B$K(B$B$*(B$B$$(B$B$F(B$f_{q}\setminus f_{p}=c_{pq}$,$f_{r}\setminus f_{q}=c_{qr}$$B$N(B$B;~(B$B!"(B$B@5(B$B5,(B$BNs(B$B$r(B$c_{pq}<c_{qr}$$B$H(B$BI=(B$B$9(B$B$3(B$B$H(B$B$,(B$B$"(B$B$k(B$B!#(B\end{defn}
\begin{thm}
\label{thm: HornSAT=00306E=005358=009023=007D50=006027}$F\in\overline{HornSAT}$$B$K(B$B$O(B$B@5(B$B5,(B$BNs(B$B$,(B$BB8(B$B:_(B$B$9(B$B$k(B$B!#(B$B$^(B$B$?(B$B@5(B$B5,(B$BNs(B$B$N(B$B5,(B$BLO(B$B$O(B$F$$B0J(B$B2<(B$B$H(B$B$J(B$B$k(B$B!#(B\end{thm}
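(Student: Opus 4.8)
The plan is to build the regular sequence (正規列) directly out of a unit-resolution refutation of $F$ and to read the size bound off the fact that such a refutation consumes each clause of $F$ at most once. Since $F\in\overline{HornSAT}$ is an unsatisfiable Horn formula and unit resolution (単位導出) is refutation-complete for Horn CNF, there is a finite sequence of unit-resolution steps deriving the empty clause from $F$. First I would fix one such refutation and record, in order, the clauses $c_{1},c_{2},\dots,c_{m}\in F$ that it consumes.

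From this refutation I would construct the chain $f_{0}<f_{1}<\dots<f_{m}$ demanded by Definition \ref{def: =006B63=00898F=005217}. Taking $f_{0}$ to be the initial unit (fact) clause and setting $f_{i}=f_{i-1}\wedge c_{i}$, each covering relation $f_{i-1}<f_{i}$ is exactly of the form $f_{i-1}\wedge c=f_{i}$ with $c=c_{i}\in F$, so $\{f_{i},<\}$ is a poset of the prescribed shape; in the difference-of-clauses presentation mentioned in the definition it is simply the totally ordered list $c_{1}<c_{2}<\dots<c_{m}$. The substantive remaining task is to verify that every $f_{i}$ is simply connected (単連結). The assignments that falsify $f_{i}$ form a union of subcubes, one per clause of $f_{i}$, namely the subcube on which that clause's literals are all false. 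Because unit resolution on a Horn clause set advances a single frontier of forced literals without branching, I would argue that consecutive subcubes in this union overlap in a chain-like fashion and that the union carries no nontrivial loop, so it is contractible and in particular simply connected. This is precisely where the causal (因果) structure of Horn clauses is essential: each derived literal has a unique justifying clause, and that uniqueness is what prevents the gluing from opening a hole.

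For the size bound I would observe that a minimal unit-resolution refutation never needs a clause twice: once a positive literal has been propagated it stays true, so each definite clause fires at most once to produce its head, and a single negative (goal) clause supplies the terminal contradiction. Hence $m\le |F|$, and the regular sequence has size at most $|F|$, which is the second assertion of the theorem.

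The step I expect to be the main obstacle is the verification of simple connectedness at every node $f_{i}$. Existence of the chain and the linear size bound follow fairly directly from completeness and minimality of unit resolution, but showing that the falsifying-assignment spaces genuinely glue to a \emph{simply} connected space, rather than merely a connected one, requires pinning down the topology placed on truth assignments in Definition \ref{def: =006B63=00898F=005217} and checking that the absence of branching in Horn unit resolution really does kill every nontrivial loop.
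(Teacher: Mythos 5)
Your proposal follows essentially the same route as the paper's own proof: both extract the regular sequence from the order in which a unit-resolution (単位導出) refutation consumes the clauses of $F$, both justify simple connectedness of each $f_{i}$ by appealing to the fact that Horn unit propagation forces a single set of variables to true without branching, and both obtain the size bound from the minimal number of clauses needed to derive the empty clause. The step you flag as the main obstacle — rigorously establishing that the falsifying-assignment space is \emph{simply} connected rather than merely connected — is asserted just as briefly in the paper's proof, so your more explicit subcube-gluing sketch is, if anything, a refinement of the same argument rather than a departure from it.
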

\begin{proof}
$\overline{HornSAT}$$B$O(B$BC1(B$B0L(B$BF3(B$B=P(B$B$,(B$B2D(B$BG=(B$B$G(B$B$"(B$B$k(B$B!#(B$B$h(B$B$C(B$B$F(B$B3F(B$B@a(B$B$+(B$B$i(B$BC1(B$B0L(B$BF3(B$B=P(B$B$K(B$B$h(B$B$j(B$B3F(B$BJQ(B$B?t(B$B$r(B$B??(B$B$K(B$B@)(B$BLs(B$B$9(B$B$k(B$B$3(B$B$H(B$B$,(B$B$G(B$B$-(B$B$k(B$B!#(B$B$3(B$B$N(B$BF3(B$B=P(B$B$N(B$B=g(B$B=x(B$B$r(B$B@5(B$B5,(B$BNs(B$B$H(B$B$9(B$B$k(B$B$H(B$B!"(B$B3F(B$f_{i}$$B$N(B$BA4(B$B$F(B$B$N(B$B@a(B$B$O(B$B3F(B$B9N(B$BDj(B$BJQ(B$B?t(B$B$,(B$B??(B$B$H(B$B$J(B$B$k(B$B??(B$BM}(B$BCM(B$B3d(B$BEv(B$B$G(B$B$N(B$B$_(B$B??(B$B$H(B$B$J(B$B$j(B$B!"(B$BB>(B$B$N(B$B??(B$BM}(B$BCM(B$B3d(B$BEv(B$B$G(B$B$O(B$B56(B$B$H(B$B$J(B$B$k(B$B!#(B$B$D(B$B$^(B$B$j(B$BA4(B$B$F(B$B$N(B$f_{i}$$B$,(B$BC1(B$BO"(B$B7k(B$B$H(B$B$J(B$B$k(B$B!#(B$B$^(B$B$?(B$B$3(B$B$N(B$B@5(B$B5,(B$BNs(B$B$N(B$B5,(B$BLO(B$B$O(B$B6u(B$B@a(B$B$N(B$BF3(B$B=P(B$B$K(B$BI,(B$BMW(B$B$J(B$B:G(B$BDc(B$B8B(B$B$N(B$B@a(B$B$N(B$B?t(B$B$H(B$B$J(B$B$k(B$B!#(B$B$h(B$B$C(B$B$F(B$BDj(B$BM}(B$B$,(B$B@.(B$B$j(B$BN)(B$B$D(B$B!#(B
\end{proof}

\section{HornSAT$B$H(BCNFSAT}

$B<!(B$B$K(B$B!"(B$CNFSAT$$B$r(B$BFb(B$BJq(B$B$9(B$B$k(B$HornSAT$$B$r(B$B<((B$B$9(B$B!#(B$B$3(B$B$N(B$HornSAT$$B$r(B$B3h(B$BMQ(B$B$9(B$B$k(B$B$3(B$B$H(B$B$K(B$B$h(B$B$j(B$B!"(B$CNFSAT$$B$r(B$HornSAT$$B$N(B$B9=(B$BB$(B$B$G(B$B07(B$B$&(B$B$3(B$B$H(B$B$,(B$B$G(B$B$-(B$B$k(B$B$h(B$B$&(B$B$K(B$B$J(B$B$k(B$B!#(B
\begin{thm}
\label{thm: CNFSAT=003092=00542B=003080HornSAT}$CNFSAT$$B$,(BP$B$K(B$BB0(B$B$9(B$B$k(B$B$J(B$B$i(B$B$P(B$B!"(B$CNFSAT$$B$N(B$BLd(B$BBj(B$B$r(B$BFb(B$BJq(B$B$9(B$B$k(B$BB?(B$B9`(B$B<0(B$B5,(B$BLO(B$B$N(B$HornSAT$$B$,(B$BB8(B$B:_(B$B$9(B$B$k(B$B!#(B\end{thm}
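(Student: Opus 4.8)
The plan is to derive the required Horn formula from the classical fact that HornSAT is P-complete, applying its hardness reduction to the (by hypothesis polynomial-time) problem $CNFSAT$. First I would unpack the assumption $CNFSAT \in P$ into a polynomial-time decision procedure, equivalently a polynomial-time uniform family of polynomial-size Boolean circuits $\{C_n\}$, so that on the encoding of a CNF instance $\phi$ of length $n$ the circuit $C_n$ outputs $1$ exactly when $\phi$ is satisfiable. The goal is then to express ``$C_n$ accepts $\phi$'' as a polynomial-size Horn formula.

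The second step is to simulate the circuit evaluation by definite Horn clauses, which is the reduction underlying the P-completeness of HornSAT. Pushing every negation to the inputs by De Morgan's laws turns $C_n$ into a monotone circuit over input literals whose values, for a fixed $\phi$, are known constants, at the cost of only a constant-factor increase in size. Introducing one variable $g$ per gate, I would encode an AND gate $g = a \wedge b$ by the Horn clause $a \wedge b \to g$, an OR gate $g = a \vee b$ by the two clauses $a \to g$ and $b \to g$, and each input literal of value $1$ by a unit clause. The conjunction of these definite clauses has a least model that assigns every gate its true computed value, so adjoining the goal clause $\neg g_{\mathrm{out}}$ gives a Horn formula $H(\phi)$ with $H(\phi) \in \overline{HornSAT}$ if and only if $g_{\mathrm{out}}$ is forced true, i.e. if and only if $\phi \in CNFSAT$. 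Because $\{C_n\}$ is of polynomial size and each gate contributes boundedly many clauses, $H(\phi)$ has size polynomial in $|\phi|$ and is the Horn instance that contains the given CNFSAT problem.

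Finally I would verify the two things the statement really asks for. For the size claim I must check that the monotone normalization and the gate-by-gate encoding keep the formula polynomial and that the map $\phi \mapsto H(\phi)$ is itself uniform, i.e. computable within the resource bound, so that it is a genuine reduction rather than a non-uniform existence statement; this size-and-uniformity bookkeeping is where the real care lies and is the main obstacle. For correctness I must confirm $H(\phi) \in \overline{HornSAT} \iff \phi \in CNFSAT$ via the least-model semantics of definite Horn formulas and align it with the paper's notion of containment, so that each such $H(\phi)$, being unsatisfiable exactly on the yes-instances, feeds into the regular-sequence structure of the preceding theorem. Granting the classical monotone-circuit-to-Horn simulation, the remainder is routine verification of equisatisfiability and of the polynomial size bound.
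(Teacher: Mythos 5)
Your proposal is correct and follows essentially the same route as the paper: both encode the assumed polynomial-time decision procedure for $CNFSAT$ as a polynomial-size Horn formula via the standard P-hardness construction for $HornSAT$. The paper phrases this as a direct simulation of the DTM computation, while you pass through a uniform circuit family and the monotone-circuit-value-to-Horn encoding; this is the same classical machinery, and your version additionally spells out the gate-by-gate clause construction, the uniformity/size bookkeeping, and the equisatisfiability check that the paper's proof leaves implicit.
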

\begin{proof}
$CNFSAT$$B$r(B$HornSAT$$B$K(B$B5"(B$BCe(B$B$7(B$B$F(B$B2r(B$B$r(B$B5a(B$B$a(B$B$k(B$B0l(B$BO"(B$B$N(B$B7W(B$B;;(B$B$r(B$B9M(B$B$((B$B$k(B$B!#(B$B$3(B$B$N(B$B0l(B$BO"(B$B$N(B$B7W(B$B;;(B$B$O(BDTM$B$G(B$B7W(B$B;;(B$B$9(B$B$k(B$B$3(B$B$H(B$B$,(B$B2D(B$BG=(B$B$G(B$B$"(B$B$k(B$B!#(B$B$h(B$B$C(B$B$F(BDTM$B$r(B$BLO(B$BJo(B$B$9(B$B$k(B$HornSAT$$B$N(B$BLd(B$BBj(B$B$b(B$BB8(B$B:_(B$B$9(B$B$k(B$B!#(BDTM$B$N(B$B7W(B$B;;(B$B$K(B$B$O(B$B=i(B$B4|(B$B>u(B$B67(B$B$H(B$B$7(B$B$F(B$CNFSAT$$B$N(B$BLd(B$BBj(B$B$,(B$B4^(B$B$^(B$B$l(B$B$k(B$B$?(B$B$a(B$B!"(BDTM$B$r(B$BLO(B$BJo(B$B$9(B$B$k(B$HornSAT$$B$N(B$BLd(B$BBj(B$B$K(B$B$b(B$B$^(B$B$?(B$CNFSAT$$B$N(B$BLd(B$BBj(B$B$,(B$B4^(B$B$^(B$B$l(B$B$k(B$B!#(B$B$^(B$B$?(B$CNFSAT$$B$,(BP$B$K(B$BB0(B$B$9(B$B$k(B$B$N(B$B$J(B$B$i(B$B$P(B$B!"(BDTM$B$N(B$B7W(B$B;;(B$B$b(B$B$^(B$B$?(BP$B$K(B$BB0(B$B$7(B$B!"(BDTM$B$r(B$BLO(B$BJo(B$B$9(B$B$k(B$HornSAT$$B$N(B$BLd(B$BBj(B$B$b(B$BB?(B$B9`(B$B<0(B$B5,(B$BLO(B$B$H(B$B$J(B$B$k(B$B!#(B$B$h(B$B$C(B$B$F(B$BDj(B$BM}(B$B$,(B$B@.(B$B$j(B$BN)(B$B$D(B$B!#(B
\end{proof}

\section{CNFSAT$B$N(B$BAj(B$B4X(B$B4X(B$B78(B}

$B<!(B$B$K(B$\overline{CNFSAT}$$B$r(B$\overline{HornSAT}$$B$K(B$B5"(B$BCe(B$B$9(B$B$k(B$B$3(B$B$H(B$B$r(B$B9M(B$B$((B$B$k(B$B!#(B$BA0(B$B=R(B\ref{thm: HornSAT=00306E=005358=009023=007D50=006027}$B$N(B$BDL(B$B$j(B$\overline{HornSAT}$$B$K(B$B$O(B$B@5(B$B5,(B$BNs(B$B$,(B$BB8(B$B:_(B$B$9(B$B$k(B$B!#(B$\overline{CNFSAT}$$B$r(B$B5"(B$BCe(B$B$7(B$B$?(B$\overline{HornSAT}$$B$K(B$B$b(B$B@5(B$B5,(B$BNs(B$B$,(B$BB8(B$B:_(B$B$9(B$B$k(B$B!#(B$B$7(B$B$+(B$B$7(B$B!"(B$\overline{CNFSAT}$$B$N(B$B9=(B$BB$(B$B$K(B$BAj(B$B4X(B$B4X(B$B78(B$B$,(B$B4^(B$B$^(B$B$l(B$B$k(B$B$?(B$B$a(B$B!"(B$\overline{HornSAT}$$B$O(B$BB?(B$B9`(B$B<0(B$B5,(B$BLO(B$B$K(B$BG<(B$B$^(B$B$i(B$B$J(B$B$$(B$B!#(B
\begin{thm}
\label{thm: CNFSAT=00306E=006B63=00898F=005217}$F\in\overline{CNFSAT}$$B$K(B$B$*(B$B$$(B$B$F(B$B!"(B$BHs(B$BO"(B$B7k(B$B$H(B$B$J(B$B$k(B$B@a(B$B$N(B$BAH(B$\left(c_{a0},c_{a1}\right)$,$\left(c_{b0},c_{b1}\right)$,$\left(c_{c0},c_{c1}\right)$,$\cdots$$B$,(B$BB8(B$B:_(B$B$9(B$B$k(B$B$H(B$B$9(B$B$k(B$B!#(B$B$^(B$B$?(B$f_{a},f_{b},f_{c},\cdots$$B$r(B$\left(c_{a0},c_{a1}\right)$,$\left(c_{b0},c_{b1}\right)$,$\left(c_{c0},c_{c1}\right)$,$\cdots$$B$N(B$B4V(B$B$N(B$BNN(B$B0h(B$B$H(B$B$9(B$B$k(B$B!#(B$B$^(B$B$?(B$f_{a},f_{b},f_{c},\cdots$$B$N(B$BQQ(B$B=8(B$B9g(B$\left(f_{a}\right)$,$\left(f_{b}\right)$,$\left(f_{c}\right)$,$\cdots$,$\left(f_{a}\vee f_{b}\right)$,$\left(f_{b}\vee f_{c}\right)$,$\left(f_{a}\vee f_{c}\right)$,$\cdots$,$\left(f_{a}\vee f_{b}\vee f_{c}\right)$,$\cdots$$B$=(B$B$l(B$B$>(B$B$l(B$B$N(B$BNN(B$B0h(B$B$N(B$B$$(B$B$:(B$B$l(B$B$+(B$B$G(B$B$N(B$B$_(B$B56(B$B$H(B$B$J(B$B$k(B$B@a(B$B$r(B$c_{a},c_{b},c_{c},\cdots$,$c_{ab},c_{ac},c_{bc},\cdots$,$c_{abc},\cdots$$B$H(B$B$9(B$B$k(B$B!#(B

$B$3(B$B$N(B$B;~(B$B!"(B$F$$B$N(B$B@5(B$B5,(B$BNs(B$B$O(B$c_{a},c_{b},c_{c},\cdots$,$c_{ab},c_{ac},c_{bc},\cdots$,$c_{abc},\cdots$$B$K(B$B$h(B$B$k(B$BH>(B$B=g(B$B=x(B$B$r(B$B4^(B$B$`(B$B!#(B$B$D(B$B$^(B$B$j(B$B2<(B$B5-(B$B$N(B$BDL(B$B$j(B$B$H(B$B$J(B$B$k(B$B!#(B

$\vdots$

$\cdots c_{abc}<c_{ab}<c_{a}$

$\cdots c_{abc}<c_{ac}<c_{a}$

$\cdots c_{abc}<c_{ab}<c_{b}$

$\cdots c_{abc}<c_{bc}<c_{b}$

$\cdots c_{abc}<c_{bc}<c_{c}$

$\cdots c_{abc}<c_{ac}<c_{c}$

$\vdots$\end{thm}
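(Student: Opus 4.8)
The plan is to exhibit an order-embedding of the Boolean lattice into the regular sequence of $F$: I will send each non-empty subset $X$ of the index set of the non-connected pairs to the clause $c_{X}$, and show that this map is an isomorphism onto a subposet of $\left\{ f_{i},<\right\} $ that is the Boolean lattice under reverse inclusion. Since a Boolean lattice on $k$ atoms has $2^{k}-1$ non-empty elements, this simultaneously produces the displayed chains $c_{abc}<c_{ab}<c_{a}$, $c_{abc}<c_{ac}<c_{a}$, $\dots$ and shows that the regular sequence is forced to be of size exponential in the number of non-connected pairs, which is the contrast with the bound $\left|F\right|$ obtained in the HornSAT case.

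To begin, let $S=\left\{ a,b,c,\dots\right\} $ index the non-connected pairs $\left(c_{a0},c_{a1}\right),\left(c_{b0},c_{b1}\right),\dots$, put $k=\left|S\right|$, and for each non-empty $X\subseteq S$ write $R_{X}=\bigcup_{i\in X}f_{i}$ for the union of the regions named in the statement. The first step is to check that for every such $X$ there really is a clause $c_{X}$ that is false exactly on $R_{X}$; this is where the hypothesis of non-connectedness enters, because it is precisely the failure of simple-connectedness at each pair that splits the falsifying assignments into the separate regions $f_{i}$ and lets each union $R_{X}$ be realized as the false-set of a single clause. I would record at this stage that distinct subsets $X$ yield distinct regions $R_{X}$, so that the $2^{k}-1$ clauses $c_{X}$ are pairwise different.

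Next I would read off the order from the definition of the regular sequence. Passing upward along a cover $f_{p}<f_{q}$ adds a clause and so enlarges the accumulated false-set; expressed on difference clauses, a clause whose individual false-region is larger is introduced earlier and sits strictly below one whose false-region is smaller. Because $Y\subsetneq X$ forces $R_{Y}\subsetneq R_{X}$, I conclude $c_{X}<c_{Y}$ exactly when $Y\subsetneq X$. This is the reverse-inclusion order, and it reproduces every displayed relation; the map $X\mapsto c_{X}$ is thus the claimed isomorphism onto the Boolean lattice, so the regular sequence contains the asserted partial order and the theorem follows.

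The hard part, and the step I expect to resist a fully rigorous treatment, is the first one: justifying that each non-empty $X$ actually contributes a distinct clause $c_{X}$ realized inside the regular sequence. I anticipate needing an \emph{independence} hypothesis on the non-connected pairs --- that no region $f_{i}$ is determined by the union of the others --- so that the $k$ binary choices are genuinely free and the full power set, rather than some collapsed quotient of it, appears. A secondary gap is the implication ``containment of false-sets forces the causation relation'': I would have to argue that simple-connectedness fails exactly at the non-connected pairs, so that these are the branch points of $\left\{ f_{i},<\right\} $ and account for the entire exponential fan-out, rather than the poset merely being a single long chain as in the HornSAT case.
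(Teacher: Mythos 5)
Your overall strategy (a direct order-embedding of the Boolean lattice) differs from the paper's, which argues by contradiction: it assumes a regular sequence omitting some $c_{X}$ (say $c_{ab}$) and observes that, since by hypothesis there is a region on which only $c_{ab}$ is false, the difference along some covering relation such as $c_{ac}<c_{a}$ can no longer be a single clause, contradicting Definition \ref{def: =006B63=00898F=005217}. More importantly, your argument has a genuine gap at its center. The step ``a clause whose individual false-region is larger is introduced earlier and sits strictly below one whose false-region is smaller'' is precisely the content of the theorem, and your justification for it --- that passing upward along a cover enlarges the accumulated false-set --- does not imply it. The accumulated false-set of the conjunction grows monotonically no matter in which order the clauses are adjoined, so this monotonicity places no constraint on the relative positions of $c_{ab}$ and $c_{a}$; nothing in the ``difference is a single clause'' condition alone forbids a regular sequence that introduces $c_{a}$ before $c_{ab}$. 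The constraint has to come from the requirement that every $f_{i}$ in the sequence be simply connected (equivalently, from the existence of a region on which only $c_{ab}$ is false), which you mention in passing but never actually deploy to exclude the bad orderings. Without that, the claimed isomorphism onto the Boolean lattice under reverse inclusion is unsupported.

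A secondary problem is that you have inverted hypothesis and conclusion in your first step. The clauses $c_{a},c_{b},\dots,c_{ab},\dots$ are \emph{given} in the statement as the clauses of $F$ that are false only on the respective regions of the power set; their existence and pairwise distinctness are part of the setup, not something to be derived from non-connectedness, and no additional ``independence'' hypothesis is needed or available. What actually has to be shown --- and what the paper's contradiction argument is aimed at, however tersely --- is that every such $c_{X}$ must occur in every regular sequence of $F$, and must occur below every $c_{Y}$ with $Y\subsetneq X$. Your proposal spends its effort on the part that is free and asserts the part that is the theorem.
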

\begin{proof}
$BGX(B$BM}(B$BK!(B$B$K(B$B$h(B$B$j(B$B<((B$B$9(B$B!#(B$B>e(B$B5-(B$BDj(B$BM}(B$B$N(B$B>r(B$B7o(B$B$K(B$B$*(B$B$$(B$B$F(B$c_{a},c_{b},c_{c},\cdots,c_{ab},c_{ac},c_{bc},\cdots,c_{abc},\cdots$$B$K(B$B$h(B$B$k(B$BH>(B$B=g(B$B=x(B$B$r(B$B4^(B$B$^(B$B$J(B$B$$(B$B@5(B$B5,(B$BNs(B$B$,(B$BB8(B$B:_(B$B$9(B$B$k(B$B$H(B$B2>(B$BDj(B$B$9(B$B$k(B$B!#(B

$B@5(B$B5,(B$BNs(B$B$K(B$c_{ab}$$B$r(B$B4^(B$B$^(B$B$J(B$B$$(B$B>l(B$B9g(B$B$r(B$B9M(B$B$((B$B$k(B$B!#(B$BDj(B$BM}(B$B$N(B$B>r(B$B7o(B$B$h(B$B$j(B$c_{ab}$$B$G(B$B$N(B$B$_(B$B56(B$B$H(B$B$J(B$B$k(B$BNN(B$B0h(B$B$,(B$BB8(B$B:_(B$B$9(B$B$k(B$B!#(B$B$h(B$B$C(B$B$F(B$B!"(B$B$b(B$B$7(B$B@5(B$B5,(B$BNs(B$B$K(B$c_{ab}$$B$r(B$B4^(B$B$^(B$B$J(B$B$$(B$B>l(B$B9g(B$B!"(B$c_{ac}<c_{a}$$B$N(B$B:9(B$BJ,(B$B$,(B$B@a(B$BC1(B$B0L(B$B$G(B$B$J(B$B$/(B$B$J(B$B$k(B$B$?(B$B$a(B$B!"(B$B@5(B$B5,(B$BNs(B$B$G(B$B$"(B$B$k(B$B$H(B$B$$(B$B$&(B$B2>(B$BDj(B$B$H(B$BL7(B$B=b(B$B$9(B$B$k(B$B!#(B$B$3(B$B$N(B$B>r(B$B7o(B$B$O(B$c_{ab}$$B$@(B$B$1(B$B$G(B$B$O(B$B$J(B$B$/(B$BB>(B$B$N(B$B@a(B$B$K(B$B$D(B$B$$(B$B$F(B$B$b(B$BF1(B$BMM(B$B$H(B$B$J(B$B$k(B$B!#(B

$B$h(B$B$C(B$B$F(B$BGX(B$BM}(B$BK!(B$B$h(B$B$j(B$BDj(B$BM}(B$B$,(B$B<((B$B$5(B$B$l(B$B$?(B$B!#(B\end{proof}
\begin{thm}
\label{thm: =007BC0=00306E=0076F4=007A4D}$f_{a},f_{b},\cdots$$B$N(B$BQQ(B$B=8(B$B9g(B$\left(f_{a}\right)$,$\left(f_{b}\right)$,$\cdots$,$\left(f_{a}\vee f_{b}\right)$,$\cdots$$B$=(B$B$l(B$B$>(B$B$l(B$B$N(B$BNN(B$B0h(B$B$N(B$B$$(B$B$:(B$B$l(B$B$+(B$B$G(B$B$N(B$B$_(B$B56(B$B$H(B$B$J(B$B$k(B$B@a(B$B$O(B$B!"(B$f_{a},f_{b},\cdots$$B$N(B$BB?(B$B9`(B$B<0(B$B5,(B$BLO(B$B$N(B$B@a(B$B$N(B$BAH(B$B9g(B$B$;(B$B$G(B$B9=(B$B@.(B$B$9(B$B$k(B$B$3(B$B$H(B$B$,(B$B$G(B$B$-(B$B$k(B$B!#(B\end{thm}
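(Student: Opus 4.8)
The plan is to use the De Morgan duality between regions of truth-assignment space and clauses, and then to recognize every power-set clause of Theorem~\ref{thm: CNFSAT=00306E=006B63=00898F=005217} as a Boolean product of a small, fixed family of generators. First I would fix the base clauses $c_a, c_b, c_c, \ldots$, one for each basic region $f_i$ delimited by the non-connected pair $\left(c_{i0}, c_{i1}\right)$: since such an $f_i$ is a single subcube of assignment space, the clause that is false exactly on $f_i$ is the single disjunction $c_i$ determined by the partial assignment defining $f_i$. These $c_i$, one per basic region, are the intended factors.

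Next I would verify that each power-set clause is the product (conjunction) of the base clauses indexed by its subset. For $S \subseteq \{a, b, c, \ldots\}$ the associated region is $f_S = \bigvee_{i \in S} f_i$, and because a conjunction of clauses is false precisely on the union of the false-sets of its conjuncts, the product $\prod_{i \in S} c_i = \bigwedge_{i \in S} c_i$ is false exactly on $\bigcup_{i \in S} f_i = f_S$. The defining property of $c_S$ --- false only on the region $f_S$ --- then forces $c_S = \prod_{i \in S} c_i$, which is precisely the direct product of clauses named in the statement. Reading the correspondence in reverse, by recording which basic regions lie inside the false-set of a given $c_S$, shows the factorization is unique, so $S \mapsto c_S$ is a bijection onto the products taken over subsets of the base family.

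The counting step then closes the argument. Although $\{c_S\}$ has $2^{n}-1$ members when there are $n$ basic regions, every member is a product of at most $n$ clauses drawn from the one family $\{c_a, c_b, c_c, \ldots\}$; hence the entire collection is \emph{constructed by a combination of polynomially many clauses of} $f_a, f_b, \ldots$, which is the assertion to be proved.

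I expect the real difficulty to lie not in the counting but in the first step. One must confirm that each basic region $f_i$ is genuinely a single subcube --- equivalently, that the non-connected pair $\left(c_{i0}, c_{i1}\right)$ produced in Theorem~\ref{thm: CNFSAT=00306E=006B63=00898F=005217} bounds a region expressible as the false-set of one clause --- since otherwise each factor $c_i$ would itself expand into several clauses and the generating family might fail to be polynomial. A companion point to check is that the distributive expansion neither creates spurious clauses nor merges two distinct regions, so that the $f_i$ act as the atoms whose unions exhaust all power-set regions; establishing this cleanly is where the relational structure of $\overline{CNFSAT}$ from the preceding theorem has to be invoked with care.
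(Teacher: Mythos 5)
Your construction diverges from the paper's at the decisive step, and the divergence opens a real gap. The paper's proof introduces fresh mediating variables $x_{0},x_{1},x_{2},\cdots$ and chains \emph{pairs} of base clauses --- for each index it takes both a clause $c_{a}$ that is false in the region $f_{a}$ and a clause $c_{A}$ that is true in every region --- into the formula $\left(c_{a}\vee x_{0}\right)\wedge\left(c_{A}\vee x_{0}\right)\wedge\left(\overline{x_{0}}\vee c_{b}\vee x_{1}\right)\wedge\left(\overline{x_{0}}\vee c_{B}\vee x_{1}\right)\wedge\cdots$, whose expansion is the full family $\left(c_{a}\vee c_{b}\vee\cdots\right)\wedge\left(c_{A}\vee c_{b}\vee\cdots\right)\wedge\cdots$ of all $2^{n}$ choice-combinations; each member of that family is a single clause (a disjunction of literals), and the polynomial bound is the paper's ``at most twice the number of regions.'' Your proof has no mediating variables, no second family $c_{A},c_{B},\cdots$, and no direct-product structure; instead you represent the object attached to a subset $S$ as the conjunction $\bigwedge_{i\in S}c_{i}$.

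That substitution is not merely a different route; it changes the type of the object being constructed. A conjunction of clauses is a CNF formula, not a clause, and the theorem --- together with its use in Theorem \ref{thm: CNFSAT=00306E=006B63=00898F=005217} and Theorem \ref{thm: CNFSAT=00306E=006B63=00898F=005217=00306E=00898F=006A21} --- needs each $c_{S}$ to be a single clause, because Definition \ref{def: =006B63=00898F=005217} builds the ordered sequence out of steps $f_{p}\wedge c=f_{q}$ in which the difference $c$ is one clause of $F$; your $\bigwedge_{i\in S}c_{i}$ cannot serve as such a step. Two further problems compound this. First, the claim that ``false exactly on $f_{S}$'' \emph{forces} $c_{S}=\bigwedge_{i\in S}c_{i}$ is unjustified: the hypothesis only says the clause is false \emph{only within} the region, not on all of it, so many inequivalent candidates share that property and your uniqueness and bijection claims do not follow. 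Second, the family $\left\{ \bigwedge_{i\in S}c_{i}\right\} $ carries no information beyond the $n$ base clauses themselves, so it cannot play the role the paper needs in the sequel, namely exhibiting exponentially many distinct single-clause elements generated from a linear-size formula. You correctly flag the subcube issue for the base regions, but the missing ingredient is the auxiliary-variable chaining that lets $2n$ clauses expand into a single-clause representative of every subset.
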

\begin{proof}
$\left(f_{a}\right)$,$\left(f_{b}\right)$,$\cdots$$B$N(B$B$$(B$B$:(B$B$l(B$B$+(B$B$N(B$BNN(B$B0h(B$B$G(B$B56(B$B$H(B$B$J(B$B$k(B$B@a(B$B$r(B$c_{a},c_{b},\cdots$$B!"(B$BA4(B$B$F(B$B$N(B$BNN(B$B0h(B$B$G(B$B??(B$B$H(B$B$J(B$B$k(B$B@a(B$B$r(B$c_{A},c_{B},\cdots$$B$H(B$B$9(B$B$k(B$B!#(B$B$3(B$B$N(B$B;~(B$B!"(B$BG^(B$B2p(B$BJQ(B$B?t(B$x_{0},x_{1},x_{2},\cdots$$B$G(B$B@a(B$B$r(B$B@\(B$BB3(B$B$7(B$B$?(B$B<0(B

$\left(c_{a}\vee x_{0}\right)\wedge\left(c_{A}\vee x_{0}\right)\wedge\left(\overline{x_{0}}\vee c_{b}\vee x_{1}\right)\wedge\left(\overline{x_{0}}\vee c_{B}\vee x_{1}\right)\wedge\cdots$

$=\left(c_{a}\vee c_{b}\vee\cdots\right)\wedge\left(c_{A}\vee c_{b}\vee\cdots\right)\wedge\left(c_{a}\vee c_{B}\vee\cdots\right)\wedge\left(c_{A}\vee c_{B}\vee\cdots\right)\wedge\cdots$

$B$O(B$\left(f_{a}\right)$,$\left(f_{b}\right)$,$\cdots$,$\left(f_{a}\vee f_{b}\right)$,$\cdots$$B$=(B$B$l(B$B$>(B$B$l(B$B$N(B$BNN(B$B0h(B$B$N(B$B$_(B$B$G(B$B56(B$B$H(B$B$J(B$B$j(B$B!"(B$B$^(B$B$?(B$B@a(B$B$N(B$B?t(B$B$b(B$f_{a},f_{b},\cdots$$B$N(B$B9b(B$B!9(B2$BG\(B$B$H(B$B$J(B$B$k(B$B!#(B$B$h(B$B$C(B$B$F(B$BDj(B$BM}(B$B$,(B$B@.(B$B$j(B$BN)(B$B$D(B$B!#(B\end{proof}
\begin{thm}
\label{thm: CNFSAT=00306E=006B63=00898F=005217=00306E=00898F=006A21}$\overline{CNFSAT}$$B$K(B$B$O(B$B@5(B$B5,(B$BNs(B$B$,(B$BB?(B$B9`(B$B<0(B$B5,(B$BLO(B$B$K(B$B$J(B$B$i(B$B$J(B$B$$(B$BLd(B$BBj(B$B$,(B$BB8(B$B:_(B$B$9(B$B$k(B$B!#(B\end{thm}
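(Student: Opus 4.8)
The plan is to combine Theorem~\ref{thm: =007BC0=00306E=0076F4=007A4D} and Theorem~\ref{thm: CNFSAT=00306E=006B63=00898F=005217} so as to exhibit a single instance of $\overline{CNFSAT}$ that is itself small yet is forced to carry a large regular sequence. First I would fix $k$ pairwise non-connected regions $f_a,f_b,\ldots$, together with their non-connected clause pairs $(c_{a0},c_{a1}),(c_{b0},c_{b1}),\ldots$, and feed them to Theorem~\ref{thm: =007BC0=00306E=0076F4=007A4D}. That theorem assembles a CNF formula $F$ whose false-regions are exactly the power-set regions $(f_a),(f_b),\ldots,(f_a\vee f_b),\ldots$, using the mediator variables $x_0,x_1,\ldots$ and a number of clauses at most twice the number of the $f$'s; hence $|F|=O(k)$, polynomial in $k$. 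This $F$ lies in $\overline{CNFSAT}$ and satisfies the hypotheses of Theorem~\ref{thm: CNFSAT=00306E=006B63=00898F=005217}.

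Next I would invoke Theorem~\ref{thm: CNFSAT=00306E=006B63=00898F=005217} on $F$. It forces the regular sequence of $F$ to contain the half-order generated by the power-set clauses $c_a,c_b,\ldots,c_{ab},c_{ac},\ldots,c_{abc},\ldots$, one clause $c_S$ for each non-empty subset $S\subseteq\{f_a,f_b,\ldots\}$. The decisive quantitative step is then a counting argument: there are $2^k-1$ non-empty subsets, and each yields a distinct clause that must appear as its own element of the half-order. Consequently the regular sequence has at least $2^k-1$ elements.

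Comparing the two bounds finishes the argument: since $|F|=O(k)$ while the regular sequence contains $\ge 2^k-1=2^{\Omega(|F|)}$ elements, its scale cannot be bounded by any polynomial in $|F|$, and this $F$ is the required witness in $\overline{CNFSAT}$. The step I expect to be the main obstacle is exactly this counting/non-collapse claim. The tension is that Theorem~\ref{thm: =007BC0=00306E=0076F4=007A4D} makes the clauses cheap to encode, so the entire exponential blow-up must reside in the order structure and not in the formula; I therefore must verify that the $2^k-1$ clauses $c_S$ are genuinely distinct and are each independently forced into the regular sequence. Concretely this means checking that the regions $f_a,f_b,\ldots$ stay pairwise non-connected inside the combined instance, so that the hypothesis of Theorem~\ref{thm: CNFSAT=00306E=006B63=00898F=005217} holds for every subset $S$ and no two $c_S$ can be merged or absorbed into one another. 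If some subset failed to contribute a separate false-region, the exponential lower bound would collapse, so establishing this non-collapse is where the real work lies.
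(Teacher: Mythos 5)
Your proposal follows essentially the same route as the paper: it combines Theorem~\ref{thm: CNFSAT=00306E=006B63=00898F=005217} (the regular sequence must contain the half-order over the power set of non-connected clause pairs, hence $2^{k}-1$ elements) with Theorem~\ref{thm: =007BC0=00306E=0076F4=007A4D} (such an instance can be built at only a constant multiple of the number of regions), and concludes that the regular sequence cannot be polynomial in the formula size. The explicit $2^{k}-1$ count and the non-collapse caveat you add are elaborations of what the paper states more tersely, not a different argument.
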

\begin{proof}
$BA0(B$B=R(B\ref{thm: CNFSAT=00306E=006B63=00898F=005217}$B$N(B$BDL(B$B$j(B$B!"(B$\overline{CNFSAT}$$B$K(B$B$O(B$BHs(B$BO"(B$B7k(B$B$H(B$B$J(B$B$k(B$B@a(B$B$N(B$BAH(B$B$N(B$BQQ(B$B=8(B$B9g(B$B$r(B$B4^(B$B$`(B$B@5(B$B5,(B$BNs(B$B$H(B$B$J(B$B$k(B$BLd(B$BBj(B$B$,(B$BB8(B$B:_(B$B$9(B$B$k(B$B!#(B$B$^(B$B$?(B$B!"(B$BA0(B$B=R(B\ref{thm: =007BC0=00306E=0076F4=007A4D}$B$N(B$BDL(B$B$j(B$B!"(B$B$3(B$B$N(B$B$h(B$B$&(B$B$J(B$BLd(B$BBj(B$B$O(B$BHs(B$BO"(B$B7k(B$B$H(B$B$J(B$B$k(B$B@a(B$B$N(B$BAH(B$B$N(B$B9b(B$B!9(B$BDj(B$B?t(B$BG\(B$B$N(B$B5,(B$BLO(B$B$G(B$B9=(B$B@.(B$B$9(B$B$k(B$B$3(B$B$H(B$B$,(B$B$G(B$B$-(B$B$k(B$B!#(B$B$h(B$B$C(B$B$F(B$BDj(B$BM}(B$B$,(B$B@.(B$B$j(B$BN)(B$B$D(B$B!#(B\end{proof}
\begin{thm}
\label{thm: CNFSAT=00306E=008907=0096D1=006027}$\overline{CNFSAT}\nleq_{p}\overline{HornSAT}$\end{thm}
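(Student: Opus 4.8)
The plan is to argue by contradiction, pitting the two size estimates proved above against each other: regular sequences of unsatisfiable Horn formulas are small by Theorem~\ref{thm: HornSAT=00306E=005358=009023=007D50=006027}, whereas Theorem~\ref{thm: CNFSAT=00306E=006B63=00898F=005217=00306E=00898F=006A21} exhibits unsatisfiable CNF formulas that force large ones. Concretely, I would first fix a family $\{F_n\}\subset\overline{CNFSAT}$ whose description size is polynomial in $n$, yet whose regular sequences are not of polynomial size; this is exactly what Theorem~\ref{thm: CNFSAT=00306E=006B63=00898F=005217=00306E=00898F=006A21} supplies, the large regular sequence coming from the power-set ordering of Theorem~\ref{thm: CNFSAT=00306E=006B63=00898F=005217} and the small description from the direct-product construction of Theorem~\ref{thm: =007BC0=00306E=0076F4=007A4D}.

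Next, suppose toward a contradiction that $\overline{CNFSAT}\leq_{p}\overline{HornSAT}$, witnessed by a polynomial-time many-one reduction $g$. Then $H_n:=g(F_n)$ is an unsatisfiable Horn formula of size polynomial in $n$, so Theorem~\ref{thm: HornSAT=00306E=005358=009023=007D50=006027} furnishes a regular sequence for $H_n$ of size at most $|H_n|$, hence polynomial in $n$. To reach a contradiction I would transport this regular sequence back across $g$: because $g$ preserves unsatisfiability, the causal poset of clause-differences carrying the regular sequence of $H_n$ ought to mirror, clause by clause, the falsifying structure of $F_n$, so that if the exponential correlation structure of $F_n$ cannot be compressed by $g$ then $F_n$ inherits a polynomial-size regular sequence -- contradicting the choice of $\{F_n\}$ and proving the theorem.

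I expect the whole argument to hinge on this transport step, and I would flag at the outset why it must be delicate. Since $\overline{HornSAT}$ is P-complete while $\overline{CNFSAT}$ is coNP-complete, the very existence of the reduction $g$ is equivalent to $P=NP$; the transport step is therefore carrying the full weight of a $P$-versus-$NP$ separation, and for that reason it cannot be waved through merely from preservation of the yes/no answer. Indeed a polynomial many-one reduction is free to introduce fresh auxiliary variables -- precisely the device used in Theorem~\ref{thm: =007BC0=00306E=0076F4=007A4D} to fold an exponential family of falsified regions into polynomially many clauses -- so a priori $g$ might reshape or collapse the clause-difference order rather than preserve it. The genuine obstacle, then, is to prove, and not merely assert, that no polynomial-time reduction can destroy the non-polynomial regular sequence of $F_n$; making the abstract's slogan ``the partially ordered set must be rebuilt inside HornSAT'' into a theorem that survives the introduction of such auxiliary variables is where any gap in the proof will almost certainly reside.
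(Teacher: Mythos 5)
Your proposal reproduces the paper's own strategy: pit the upper bound on regular sequences of unsatisfiable Horn formulas (Theorem~\ref{thm: HornSAT=00306E=005358=009023=007D50=006027}) against the non-polynomial lower bound claimed for certain unsatisfiable CNF formulas (Theorem~\ref{thm: CNFSAT=00306E=006B63=00898F=005217=00306E=00898F=006A21}), and extract a contradiction from a hypothetical reduction. The ``transport step'' you flag in your closing paragraph is a genuine gap, it carries the entire weight of the theorem, and it is equally unproven in the paper's own proof, which merely juxtaposes the two size estimates and declares them contradictory without ever relating the regular sequence of the Horn image $g(F_{n})$ to that of the CNF preimage $F_{n}$.

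The step does not just lack a proof; it fails. A polynomial-time many-one reduction $g$ is only required to satisfy $F\in\overline{CNFSAT}\Leftrightarrow g(F)\in\overline{HornSAT}$; nothing constrains the internal clause structure, the unit-propagation order, or any poset attached to $g(F)$. Worse, under the very hypothesis being refuted, $\overline{CNFSAT}\leq_{p}\overline{HornSAT}$ together with the polynomial-time decidability of $\overline{HornSAT}$ puts $\overline{CNFSAT}$ in P, so a legitimate reduction would be: decide $F$ outright and output one of two fixed, constant-size Horn formulas (one unsatisfiable, one satisfiable). For that reduction, the regular sequence of $g(F_{n})$ is a constant-size object bearing no relation to the power-set ordering of Theorem~\ref{thm: CNFSAT=00306E=006B63=00898F=005217}, there is nothing to pull back to $F_{n}$, and no contradiction arises. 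An argument of the form ``assume any reduction exists and derive a contradiction from regular-sequence sizes'' must in particular work against this trivial reduction, and it cannot; hence the schema is unsalvageable without a fundamentally new idea. Since $\overline{HornSAT}$ is P-complete and $\overline{CNFSAT}$ is coNP-complete, the statement is equivalent to $\mathrm{P}\neq\mathrm{NP}$, so the missing argument would have to carry that entire separation --- and the ``non-polynomial regular sequence'' of $F_{n}$ is a property of one particular refutation ordering (indeed, one whose definition via unit propagation does not obviously extend from Horn formulas to general CNF), not an invariant that a many-one reduction is obliged to respect.
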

\begin{proof}
$BA0(B$B=R(B\ref{thm: CNFSAT=00306E=006B63=00898F=005217=00306E=00898F=006A21}$B$N(B$BDL(B$B$j(B$B!"(B$\overline{CNFSAT}$$B$K(B$B$O(B$B@5(B$B5,(B$BNs(B$B$,(B$BB?(B$B9`(B$B<0(B$B5,(B$BLO(B$B$K(B$B$J(B$B$i(B$B$J(B$B$$(B$BLd(B$BBj(B$B$,(B$BB8(B$B:_(B$B$9(B$B$k(B$B!#(B$B$7(B$B$+(B$B$7(B$BA0(B$B=R(B\ref{thm: CNFSAT=003092=00542B=003080HornSAT}$B$N(B$BDL(B$B$j(B$B!"(B$CNFSAT$$B$,(BP$B$K(B$BB0(B$B$9(B$B$k(B$B$J(B$B$i(B$B$P(B$CNFSAT$$B$N(B$BLd(B$BBj(B$B$r(B$BFb(B$BJq(B$B$9(B$B$k(B$BB?(B$B9`(B$B<0(B$B5,(B$BLO(B$B$N(B$HornSAT$$B$,(B$BB8(B$B:_(B$B$9(B$B$k(B$B!#(B$B$^(B$B$?(B$BA0(B$B=R(B\ref{thm: HornSAT=00306E=005358=009023=007D50=006027}$B$N(B$BDL(B$B$j(B$B!"(B$\overline{HornSAT}$$B$N(B$B@5(B$B5,(B$BNs(B$B$O(B$BO@(B$BM}(B$B<0(B$B$N(B$B5,(B$BLO(B$B0J(B$B2<(B$B$H(B$B$J(B$B$k(B$B!#(B$B$h(B$B$C(B$B$F(B$CNFSAT$$B$,(BP$B$K(B$BB0(B$B$9(B$B$k(B$B$H(B$B$$(B$B$&(B$B>r(B$B7o(B$B$H(B$BL7(B$B=b(B$B$9(B$B$k(B$B!#(B

$B0J(B$B>e(B$B$h(B$B$j(B$B!"(B$\overline{CNFSAT}$$B$O(B$BB?(B$B9`(B$B<0(B$B5,(B$BLO(B$B$G(B$\overline{HornSAT}$$B$K(B$B5"(B$BCe(B$B$9(B$B$k(B$B$3(B$B$H(B$B$,(B$B$G(B$B$-(B$B$J(B$B$$(B$B$3(B$B$H(B$B$,(B$B$o(B$B$+(B$B$k(B$B!#(B\end{proof}

\end{document}